\documentclass[11pt]{article}

\usepackage[margin=1in]{geometry}
\usepackage{amsmath,amssymb,amsthm,mathtools}
\usepackage{microtype}
\usepackage[hidelinks]{hyperref}

\newtheorem{theorem}{Theorem}
\newtheorem{lemma}[theorem]{Lemma}
\newtheorem{proposition}[theorem]{Proposition}
\newtheorem{example}[theorem]{Example}
\newtheorem{definition}[theorem]{Definition}

\newcommand{\Z}{\mathbb Z}
\newcommand{\PN}{\mathcal P_{n,N}}
\newcommand{\pos}{\operatorname{pos}}

\title{An Elementary Proof of the \(\widetilde O(n^{1/3})\) Bound for Separating Words}
\author{Chen Xu}
\date{}

\begin{document}

\maketitle

\begin{abstract}
For two distinct binary words of length \(n\), the separating words problem asks
for a small deterministic finite automaton that accepts exactly one of them.
Chase proved a \(\widetilde O(n^{1/3})\) upper bound using a complex-analytic
estimate for sparse polynomials. We replace that estimate by a finite-difference
argument and a second-order real recurrence cutoff. The resulting elementary
proof gives an explicit bound of \(O(n^{1/3}(\log n)^{7/3})\) states.
\end{abstract}

\section{Introduction}

For distinct binary words \(x,y\in\{0,1\}^n\), let
\(\operatorname{sep}(x,y)\) be the minimum number of states in a deterministic
finite automaton that accepts exactly one of \(x\) and \(y\). Define
\[
  S(n)=\max_{x\ne y\in\{0,1\}^n}\operatorname{sep}(x,y).
\]
The separating words problem asks for asymptotic upper and lower bounds on
\(S(n)\).

\paragraph{Previous work.}
The problem was introduced by Goralčík and Koubek
\cite{goralcik-koubek}; see also the survey of Demaine, Eisenstat, Shallit, and
Wilson \cite{demaine-eisenstat-shallit-wilson}. Robson proved
\(S(n)=O(n^{2/5}(\log n)^{3/5})\) \cite{robson1989} and later studied related
machine and group formulations \cite{robson1996}. Chase obtained
\(S(n)=O(n^{1/3}\log^7 n)\) by reducing the problem to separated sets of
occurrence positions \cite{chase}. Lower bounds are connected with short identities in
finite transformation semigroups \cite{bulatov-karpova-shur-startsev}.

\paragraph{Contribution.}
Our contribution is an alternative direct proof of the sparse-polynomial
divisibility estimate underlying Chase's \(\widetilde O(n^{1/3})\) bound
\cite{chase}. Chase obtains this estimate from a complex-analytic statement;
we instead prove it directly using a finite-difference characterization of
divisibility by \((1-x)^K\) and a second-order real recurrence that separates
an isolated constant term from a distant coefficient tail. Once this estimate
is established, the reduction to a separating DFA follows Chase's argument and
is included here for completeness. Tracking the parameters gives the following
explicit bound.

\begin{theorem}\label{thm:main}
\[
  S(n)=O\!\left(n^{1/3}(\log n)^{7/3}\right).
\]
\end{theorem}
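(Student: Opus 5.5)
We follow Chase's reduction and change only the analytic step. Let \(x\ne y\) have length \(n\). We will build the automaton in three layers. A cycle of length \(p\) reads the input positions modulo \(p\). A small counter tracks the occurrences of \(1\)s, or of a fixed short pattern, among positions \(\equiv r \pmod p\). Finally, the automaton compares the number of such occurrences modulo a small prime \(q\), or counts them up to a threshold.

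For words whose difference set \(D=\{i: x_i\ne y_i\}\) is "spread out", the occurrence polynomials \(P_x(t)=\sum_{i:x_i=1}t^i\) and \(P_y\) differ. Success of the construction with parameters \(p,q\) is equivalent to the statement that \(P_x-P_y\) does not vanish in \(\mathbb F_q[t]/(t^p-1)\) in a suitable restricted sense. Following Chase, we first reduce to the case in which the relevant position sets are separated. This is done by passing to occurrences of a pattern of length \(k\approx n^{1/3}\)-scale, chosen with a \(\log\)-dependent pigeonhole so that occurrences are at mutual distance at least \(k\).

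The core estimate is the sparse-polynomial statement. Let \(f=\sum_{j=0}^{s} c_j t^{e_j}\) have \(c_0\ne 0\) isolated from the rest (all other \(e_j\ge d\)) and small integer coefficients. If \(f\) is divisible by \((1-t)^K\), then \(K\) must be at most about \(\sqrt{n\log n/d}\) times a polylog factor. We will prove this directly. Divisibility by \((1-t)^K\) is equivalent to the vanishing of the moments \(\sum_j c_j e_j^m=0\) for \(m<K\), equivalently of all finite differences of order \(<K\) applied to the coefficient sequence. We then build a polynomial test function \(g\) of degree \(<K\) with \(g(0)=1\) and \(|g(e)|\) tiny for \(e\in[d,n]\). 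The natural choice is a Chebyshev-type polynomial obtained from a second-order real recurrence \(u_{m+1}=2\alpha u_m-u_{m-1}\), rescaled to the interval \([d,n]\). Its extremal growth outside the interval gives \(g(0)\ge \exp(cK\sqrt{d/n})\) relative to \(\sup_{[d,n]}|g|\le 1\). Pairing \(g\) with \(f\) gives \(|c_0|\,|g(0)|\le s\max|c_j|\). This forces \(K\lesssim \sqrt{n/d}\,\log(s\max|c_j|)\).

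We expect the main obstacle to be this cutoff: making the recurrence bound fully explicit with integer-friendly constants, and confirming that the coefficient size \(\max|c_j|\) (counts of occurrences, at most \(n\)) costs only a \(\log n\) factor.

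With the estimate in hand, we run Chase's counting. Suppose no pair \((p,q)\) with \(p\le P\) and \(q\) a small prime works. Then the difference polynomial is divisible by many cyclotomic factors. After a change of variables along residues, this yields divisibility by \((1-t)^K\) modulo \(q\), and lifting over several primes \(q\le \log n\) gives the integer version. Comparing this with the core estimate produces a contradiction once \(P\gtrsim n^{1/3}(\log n)^{c}\). The total state count is \(O(P\cdot q\cdot\text{pattern length})\). Optimizing the pattern length \(k\) and the separation \(d\approx k\) against \(K\approx\sqrt{n/d}\log n\) balances the terms at \(k\approx n^{1/3}(\log n)^{\cdot}\). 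Tracking exponents as in the abstract gives \((\log n)^{7/3}\).
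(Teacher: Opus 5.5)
Your analytic core matches the paper's: the moment (finite-difference) characterization of divisibility by $(1-t)^K$, tested against a Chebyshev-type recurrence polynomial rescaled so the tail $[N,n]$ maps to $[-1,1]$ while $0$ lies at distance $\asymp N/n$ outside. The gaps are in the reduction around it. The decisive one is the bridge between the automaton and $(1-t)^K$. You claim that ``no $(p,q)$ works'' yields divisibility by $(1-t)^K$ modulo the counter prime $q$, and that this lifts ``over several primes $q\le\log n$'' to $\mathbb{Z}$. That cannot work. The quantities that must vanish are the moments $\sum_j c_je_j^m$, which can be as large as $n^{m+1}$, whereas $\prod_{q\le\log n}q=n^{1+o(1)}$. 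So CRT over these $q$ can only upgrade ``counts agree mod $q$'' to ``counts agree exactly'', i.e.\ to $(t^p-1)\mid f$ over $\mathbb{Z}$. Moreover, in $\mathbb{F}_q[t]$ the factor $t^p-1$ with prime $p\ne q$ contains $t-1$ only once, so no high power of $1-t$ appears modulo $q$ anyway. The arithmetic link actually runs through the cycle length $p$: $\sum_{a\in A}a^m-\sum_{b\in B}b^m\equiv\sum_i(|A_{i,p}|-|B_{i,p}|)\,i^m\pmod p$. The paper uses this forward. Proposition~\ref{prop:div} and Lemma~\ref{lem:divtest} give some $m<K_0$ with $D\ne0$ and $|D|\le n^{m+1}$. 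Hence there is a prime $2N<p=O(N+m\log n)=O(N)$ with $p\nmid D$, and so a residue class with unequal counts. Only afterwards is $q=O(\log n)$ chosen not dividing the count difference. A contrapositive version must run CRT over many primes $p$ with $\prod p>n^{K}$, not over the $q$'s.

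Three further points break or leave open the bound. First, your state count $O(P\cdot q\cdot\text{pattern length})$, with pattern length of order $n^{1/3}$, gives about $n^{2/3}$. The missing observation is this: when $|w|<p$, at most one candidate occurrence starting in the class $i\bmod p$ is alive at a time, and the clock value determines the offset inside it. So one ``alive'' bit suffices and the DFA has $2pq$ states; this is why the paper imposes $p>2N=|w|$. Second, your core estimate assumes the constant term is isolated. But the normalized difference of two $N$-separated sets can be $1-x^d+(\text{degrees}\ge N)$ with $1\le d<N$, because an element of $B$ may sit within $N$ of the least element of $A\setminus B$. The paper handles this with $R(t)=(1-t/d)\,\Phi_\kappa(u(t))/\Phi_\kappa(u(0))$, which kills the $-x^d$ term at the cost of one degree and a factor $2n$ in the tail bound. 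Third, the separation step needs a real argument rather than a ``log-dependent pigeonhole''. Take $v$ to be the $2N-1$ letters before the first difference; the Fine--Wilf-type Lemma~\ref{lem:overlap} shows one of $v0,v1$ has no period $<N$. A difference among the first $2N$ letters is handled separately by a prefix DFA. This yields $N$-separated occurrences of a length-$2N$ block, not separation equal to the pattern length as you assert.
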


The direct proof of the divisibility estimate is the new ingredient; the
subsequent DFA construction is due to Chase. We do not seek a better power of
\(n\): the separated-set framework has an \(n^{1/3}\)-scale barrier
\cite{chase}.

\paragraph{Proof outline.}
We first find a block whose occurrence positions in the two words form distinct
\(N\)-separated sets. Their difference is encoded by a sparse polynomial.
The central divisibility estimate shows that this polynomial cannot contain a
large factor \((1-x)^K\). A low-degree power-sum difference then yields a
residue class modulo a prime \(p=O(N)\), and a modular occurrence-counting DFA
separates the words. Section~\ref{sec:prop3} proves the divisibility estimate by
finite differences and a real recurrence cutoff. The final section compares the
resulting parameter choices with the previous proof.

\section{Main Proof}

Throughout the proof, all logarithms are natural. We assume \(n\) is
sufficiently large; the remaining values are absorbed into the implicit
constant in Theorem~\ref{thm:main}. Set
\[
  N=\left\lceil n^{1/3}(\log n)^{4/3}\right\rceil .
\]
For large \(n\), we have \(1\le N\le n/3\). We write
\([n]=\{0,1,\ldots,n-1\}\). For a word \(x\) and a block \(w\), let
\[
  \pos_w(x)=\{j:x_jx_{j+1}\cdots x_{j+|w|-1}=w\}.
\]
A set \(A\subseteq[n]\) is \(N\)-separated if any two distinct elements of
\(A\) differ by at least \(N\). For a modulus \(p\) and residue \(i\), write
\[
  A_{i,p}=\{a\in A:a\equiv i\pmod p\}.
\]

\subsection{From words to separated occurrences}

We first record the elementary periodicity fact used to create separated
occurrence sets.

\begin{definition}[Period]
A positive integer \(p\le L\) is a \emph{period} of a word
\(z=z_0z_1\cdots z_{L-1}\) if \(z_i=z_{i+p}\) for every \(0\le i<L-p\).
\end{definition}

\begin{lemma}\label{lem:overlap}
If \(p\) and \(q\) are periods of a word \(z\) of length \(L\) and
\(L\ge p+q-\gcd(p,q)\), then \(\gcd(p,q)\) is also a period.
\end{lemma}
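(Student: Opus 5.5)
We prove the Fine--Wilf statement by strong induction on \(p+q\), following the Euclidean algorithm. Since the hypothesis and conclusion are symmetric in \(p\) and \(q\), we may assume \(p\le q\). If \(p=q\), then \(\gcd(p,q)=p\) is already a period and there is nothing to prove. So assume \(p<q\) and set \(d=\gcd(p,q)\). Note that \(\gcd(p,q-p)=d\) as well.

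The plan is to pass from the periods \((p,q)\) to the smaller pair \((p,q-p)\) on a shorter word. Consider the suffix \(z'=z_p z_{p+1}\cdots z_{L-1}\), of length \(L'=L-p\ge q-d\). Since \(d\mid q-p\) and \(q-p\ge 1\), we have \(q-p\ge d\), and therefore \(L'\ge q-p\). Both \(p\) and \(q\) remain periods of \(z'\) as long as they do not exceed \(L'\), because the period condition is inherited by factors.

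\emph{Key step: \(q-p\) is a period of \(z'\).} Fix an index \(i\) with \(p\le i<L-(q-p)\). We want \(z_i=z_{i+q-p}\). Since \(i\ge p\), the period \(p\) gives \(z_{i-p}=z_i\). Moreover \(i-p<L-q\), so the period \(q\) gives \(z_{i-p}=z_{i-p+q}\). Combining the two equalities yields \(z_i=z_{i+q-p}\).

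The main subtlety is that \(p\) might not be a legitimate period of \(z'\) in the sense of the definition, which requires \(p\le L'\). This bookkeeping is where I expect the real care is needed. We split into two cases.

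\begin{itemize}
\item If \(p\le L'\), then \(p\) and \(q-p\) are both periods of \(z'\), and the length condition \(L'\ge p+(q-p)-d\) holds. Since \(p+(q-p)<p+q\), the induction hypothesis shows that \(d\) is a period of \(z'\).
\item If \(p>L'\), then \(L<2p\), and we argue directly on the whole word \(z\), not on \(z'\). In this range the composition above also works: for each \(i<L-d\) one can reach \(i+d\) by steps of \(\pm p\) and \(\pm q\) that stay inside \([0,L)\). We show this using the standard walk on residues: from the current position, step \(+p\) when it is in range and \(-q\) otherwise. The hypothesis \(L\ge p+q-d\) is exactly what guarantees this walk never gets stuck.
\end{itemize}

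In fact, I would simply adopt this walk argument uniformly, which avoids the case split altogether. The claim is: for any position \(j\in[0,L)\), the map \(j\mapsto j+p\) if \(j+p<L\), else \(j\mapsto j-q\), is well defined on \([0,L-d)\) up to the needed range, and it preserves letters. Iterating it reaches \(j+d\) by Bézout's identity.

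Finally, it remains to transfer the conclusion back to \(z\) in the first case: from \(d\) being a period of \(z'\) we must deduce that \(d\) is a period of \(z\). For indices \(i<p\), write \(z_i=z_{i+p}\) and apply the period \(d\) of \(z'\), using \(d\mid p\). This handles those indices, provided \(i+p+d<L\), and that inequality follows from \(L\ge p+q-d\ge p+2d-d\).
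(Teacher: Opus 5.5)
Your Euclidean descent is a genuinely different route from the paper's. The paper argues in one line via B\'ezout: it chooses \(k,\ell\ge0\) with \(kp-\ell q=\gcd(p,q)\) and chains \(z_i=z_{i+kp}=z_{i+\ell q+\gcd(p,q)}=z_{i+\gcd(p,q)}\). That is short, but nothing guarantees that \(i+kp\) and \(i+\ell q+\gcd(p,q)\) lie in \([0,L)\). The hypothesis \(L\ge p+q-\gcd(p,q)\) is never used, and it cannot be dropped: the word \(010\) has periods \(2\) and \(3\) but not \(1\). Your reduction from \((p,q)\) on \(z\) to \((p,q-p)\) on the suffix \(z'\) keeps every index inside the word, so it supplies the range control the one-line argument skips. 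Your key step (that \(q-p\) is a period of \(z'\)) and the length check \(L'\ge q-d\) are correct.

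Two bookkeeping steps are wrong as written, though both are easy to repair.

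First, the case \(p>L'\) never occurs. You already noted \(q-p\ge d\), and this gives \(L'\ge q-d\ge p\), i.e.\ \(L\ge 2p\). This matters because the walk you offer for that case, and propose to use uniformly, is not correct as stated. When \(L=p+q-d\), the greedy walk is stuck at every \(j\) with \(q-d\le j<q\), since \(j+p\ge L\) and \(j-q<0\). What is true is that in each residue class mod \(d\) the walk traces a path: the cycle \(j\mapsto(j+p)\bmod(p+q)\) with one point deleted. So forward iteration from \(j\) need not reach \(j+d\); for example, with \(p=2\), \(q=3\), \(L=4\), the walk from \(0\) goes to \(2\) and stops. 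Making the walk rigorous requires that path argument plus a restriction to factors of length exactly \(p+q-d\). Just delete the walk.

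Second, in transferring the period \(d\) from \(z'\) back to \(z\), the inequality \(i+p+d<L\) does not follow from \(L\ge p+d\). It fails for \(p-d\le i<p\) when \(L=2p\) (e.g.\ \(p=2\), \(q=3\), \(L=4\), \(i=1\)). Here \(d\mid p\) is genuinely needed. For \(i<p\) with \(i+d<L\), you have \(z_i=z_{i+p}\) since \(L\ge2p\). If \(i+d\ge p\), the positions \(i+d\) and \(i+p\) both lie in \([p,L)\) and differ by \(p-d\equiv0\pmod d\), so the period \(d\) of \(z'\) gives \(z_{i+d}=z_{i+p}\). If \(i+d<p\), use \(z_{i+d}=z_{i+d+p}\) together with \(z_{i+p}=z_{i+p+d}\), the latter from the period \(d\) of \(z'\). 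With these two repairs your induction is a complete proof.
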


\begin{proof}
Suppose \(\gcd(p,q)\) is not a period, so
\(z_i\ne z_{i+\gcd(p,q)}\) for some \(i\). Choose \(k,\ell\ge0\) with
\(kp-\ell q=\gcd(p,q)\). Since \(p\) and \(q\) are periods,
\(z_i=z_{i+kp}=z_{i+\ell q+\gcd(p,q)}=z_{i+\gcd(p,q)}\), a contradiction.
\end{proof}

\begin{lemma}\label{lem:twoextension}
For \(v\in\{0,1\}^{2N-1}\), either \(v0\) or \(v1\) has no period \(<N\).
\end{lemma}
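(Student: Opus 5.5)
We argue by contradiction. Suppose both \(v0\) and \(v1\) (each of length \(L=2N\)) have a period less than \(N\). Let \(p<N\) be a period of \(v0\) and \(q<N\) a period of \(v1\). Every period of a word is also a period of each of its prefixes, so both \(p\) and \(q\) are periods of the common prefix \(v\), which has length \(2N-1\).

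The key step is to apply Lemma~\ref{lem:overlap} to \(v\). Since \(p,q\le N-1\), we have
\[
p+q-\gcd(p,q)\le 2N-3<2N-1,
\]
so the length condition holds. Hence \(d=\gcd(p,q)\) is a period of \(v\).

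Next we transfer the period \(p\) of \(v0\) to the last letter. The last letter of \(v0\) sits at index \(2N-1\), and since \(2N-1-p\ge N\ge 0\), periodicity gives that this letter equals \(v_{2N-1-p}\). Because \(d\mid p\) and \(d\) is a period of \(v\), we can step down from index \(2N-1-p\) in increments of \(d\), staying inside \(v\) throughout. This shows that the appended letter equals \(v_{2N-1-d}\).

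The identical argument applied to \(v1\) with period \(q\) shows that its last letter also equals \(v_{2N-1-d}\), using that \(d\mid q\). So \(0=v_{2N-1-d}=1\), a contradiction. Therefore at least one of \(v0\), \(v1\) has no period smaller than \(N\).

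The only delicate point is checking index ranges, namely that every index in the chain \(2N-1-p,\ 2N-1-p+d,\ \dots,\ 2N-1-d\) lies in \([0,2N-2]\). This holds because \(p<N\) and \(d\ge 1\). No other obstacle arises.
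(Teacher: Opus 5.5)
Your proof is correct and follows the paper's argument: you restrict both periods to the common prefix \(v\), use Lemma~\ref{lem:overlap} to make \(d=\gcd(p,q)\) a period of \(v\), and then show the two appended letters must agree. Your route through \(v_{2N-1-d}\), with the index-range check, simply spells out the step the paper compresses into the congruence \(2N-1-p\equiv 2N-1-q\pmod{\gcd(p,q)}\).
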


\begin{proof}
Suppose \(v0\) and \(v1\) have periods \(p,q<N\). The common prefix \(v\) has
periods \(p,q\), and \(p+q-\gcd(p,q)<2N-1=|v|\), so
Lemma~\ref{lem:overlap} implies that \(\gcd(p,q)\) is a period of \(v\). Since
\(2N-1-p\equiv2N-1-q\pmod{\gcd(p,q)}\),
\(0=(v0)_{2N-1}=v_{2N-1-p}=v_{2N-1-q}=(v1)_{2N-1}=1\), a contradiction.
\end{proof}

\begin{lemma}\label{lem:separated}
For distinct \(x,y\in\{0,1\}^n\), either \(x_i\ne y_i\) for some \(i<2N\), or
there is a \(w\in\{0,1\}^{2N}\) such that \(\pos_w(x)\ne\pos_w(y)\) and both
sets are \(N\)-separated.
\end{lemma}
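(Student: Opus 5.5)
Assume \(x_i=y_i\) for all \(i<2N\); otherwise the first alternative of the statement already holds. Let \(v=x_0\cdots x_{2N-2}=y_0\cdots y_{2N-2}\), which has length \(2N-1\). By Lemma~\ref{lem:twoextension}, one of \(v0,v1\) has no period \(<N\), and we call it \(u\). Since \(x_{2N-1}=y_{2N-1}\), both words begin with the same letter after \(v\). The block \(u\) need not occur at position \(0\), and that is acceptable: we only need some block of length \(2N\) with no small period whose occurrence sets differ.

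The plan therefore has three steps. First, show that any word \(w\) of length \(2N\) with no period \(<N\) has \(N\)-separated \(\pos_w(z)\) in every word \(z\). Second, produce such a \(w\) with \(\pos_w(x)\neq\pos_w(y)\). Third, handle the case where the natural candidate fails.

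Step one is standard. If \(w\) occurs at positions \(j<j'\) with \(d=j'-j<N\), the two occurrences overlap, so \(w_i=z_{j'+i}=w_{i+d}\) for \(0\le i<2N-d\). Then \(d\) is a period of \(w\) with \(d<N\), which is a contradiction. So both occurrence sets are \(N\)-separated whenever \(w\) has no period \(<N\).

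Step two is the main obstacle: we must guarantee that the occurrence sets actually differ. I would use the first mismatch. Let \(m\) be the least index with \(x_m\neq y_m\), so \(m\ge 2N\). Set \(v=x_{m-2N+1}\cdots x_{m-1}=y_{m-2N+1}\cdots y_{m-1}\), which has length \(2N-1\). Then \(vx_m\) and \(vy_m\) are exactly \(v0\) and \(v1\) in some order, so Lemma~\ref{lem:twoextension} says one of them has no period \(<N\); call it \(w\). Say \(w=vx_m\), the other case being symmetric. Then \(w\) occurs in \(x\) at position \(j=m-2N+1\). It does not occur in \(y\) at \(j\), because \(y_m\neq x_m\). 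Hence \(j\in\pos_w(x)\setminus\pos_w(y)\), so the sets differ, and by step one both are \(N\)-separated. The position \(j\) is valid since \(m\le n-1\), and \(2N\le n\) holds because \(N\le n/3\).

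This argument works whenever \(m\ge 2N-1\), so in fact the first alternative could be weakened to \(i<2N-1\). The only delicate point is choosing the window ending at the first mismatch, which makes the common prefix \(v\) of length \(2N-1\) available for Lemma~\ref{lem:twoextension}. With that choice no third case remains.
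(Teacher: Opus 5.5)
Your argument is correct and is essentially the paper's proof: take the length-$2N$ window ending at the first mismatch $m\ge 2N$, apply Lemma~\ref{lem:twoextension} to the common $(2N-1)$-letter part, and note that two occurrences at distance $d<N$ would make $d$ a period of $w$ (you check this overlap step explicitly, while the paper only asserts it). Your opening paragraph about the prefix block $u$ is never used and can be deleted, and your side remark that the first alternative could be weakened to $i<2N-1$ is also correct.
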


\begin{proof}
Let \(k=\min\{i:x_i\ne y_i\}\). If \(k<2N\), the first alternative holds.
Otherwise let
\(v=x_{k-2N+1}\cdots x_{k-1}=y_{k-2N+1}\cdots y_{k-1}\). By
Lemma~\ref{lem:twoextension}, choose \(w\in\{v0,v1\}\) with no period \(<N\).
Since \(x_k\ne y_k\), \(w\) occurs at position \(k-2N+1\) in exactly one of
\(x,y\), so \(\pos_w(x)\ne\pos_w(y)\). If two occurrences of \(w\) in either
word began \(d<N\) positions apart, then \(d\) would be a period of \(w\), a
contradiction. Hence both occurrence sets are \(N\)-separated.
\end{proof}

\subsection{The sparse-polynomial reduction}

Let \(A,B\subseteq[n]\) be distinct \(N\)-separated sets. Define
\[
  f_{A,B}(x)=\sum_{a\in A}x^a-\sum_{b\in B}x^b.
\]
Let \(r\) be the least exponent whose coefficient in \(f_{A,B}\) is nonzero.
After changing the sign of \(f_{A,B}\) if necessary, assume this coefficient is
\(+1\), and set
\[
  F(x)=x^{-r}f_{A,B}(x).
\]
Then \(F(0)=1\). The first nonzero position \(r\) lies in \(A\setminus B\).
Degrees \(1,\ldots,N-1\) of \(F\) correspond to positions
\(r+1,\ldots,r+N-1\). Since \(A\) is \(N\)-separated, there is no other
positive term in this range. Since \(B\) is \(N\)-separated, there is at most
one negative term in this range. Therefore the low-degree part of \(F\) is
either empty or exactly \(-x^d\) for some \(1\le d<N\).

This motivates the following two families:
\[
  \PN^{(0)}
  =
  \left\{1+\sum_{j=N}^n a_jx^j:\ |a_j|\le1\right\},
\]
and
\[
  \PN^{(1)}
  =
  \left\{1-x^d+\sum_{j=N}^n a_jx^j:\ 1\le d<N,\ |a_j|\le1\right\}.
\]
We write
\[
  \PN=\PN^{(0)}\cup\PN^{(1)}.
\]
The normalized polynomial \(F\) always lies in \(\PN\).

\begin{example}
For \(N=5\), let \(A=\{3,11,19\}\) and \(B=\{7,15\}\). Then
\[
  f_{A,B}(x)=x^3-x^7+x^{11}-x^{15}+x^{19},
\]
and after shifting by \(x^{-3}\),
\[
  F(x)=1-x^4+x^8-x^{12}+x^{16}\in\PN^{(1)}.
\]
\end{example}

The technical input is the following divisibility estimate.

\begin{proposition}\label{prop:div}
No \(P\in\PN\) has a factor \((1-x)^K\) with
\(K\ge8\sqrt{n/N}\log n\).
\end{proposition}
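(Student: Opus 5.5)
The plan is to turn divisibility by \((1-x)^K\) into a family of linear conditions on the coefficients, and then to violate one of them with a single well-chosen test polynomial. The characterization I would use is the standard finite-difference one. Write \(P(x)=\sum_j c_jx^j\). Then \((1-x)^K\mid P\) if and only if \(P^{(m)}(1)=0\) for all \(m<K\). Equivalently,
\[
  \sum_j c_j\,q(j)=0
\]
for every real polynomial \(q\) of degree \(<K\), because the falling factorials \(j(j-1)\cdots(j-m+1)\), \(m<K\), span such polynomials. So it suffices to exhibit, for every \(P\in\PN\), a polynomial \(q\) with \(\deg q<K\) and
\[
  \Bigl|\sum_j c_jq(j)\Bigr|>0 .
\]

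The test polynomial will be built from the second-order recurrence \(u_{k+1}(t)=2\lambda(t)u_k(t)-u_{k-1}(t)\), with \(u_0=1\) and \(u_1=\lambda\). This is the Chebyshev polynomial \(T_k(\lambda(t))\), where \(\lambda(t)=\frac{n+N-2t}{n-N}\) maps the tail range \([N,n]\) onto \([-1,1]\). For \(t\in[N,n]\) the recurrence keeps \(|u_k(t)|\le 1\), which I would prove by induction via \(\lambda=\cos\theta\). At \(t=0\) we have \(\lambda(0)=1+\frac{2N}{n-N}\ge 1+2N/n\). There the recurrence grows at least like \(\tfrac12\rho^k\) with \(\rho=\lambda+\sqrt{\lambda^2-1}\ge 1+\sqrt{4N/n}\). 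Hence
\[
  u_k(0)\ge \tfrac12\exp\!\bigl(c\,k\sqrt{N/n}\bigr).
\]
Taking \(k\approx 4\sqrt{n/N}\log n\) then gives \(u_k(0)\ge n^{3}\), say, for large \(n\). I would check these growth bounds directly from the recurrence (monotone ratio \(u_{k+1}/u_k\ge\rho\)) rather than citing Chebyshev theory, in keeping with the elementary aim.

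For \(P\in\PN^{(0)}\), take \(q=u_k\). Then
\[
  \Bigl|\sum_j c_jq(j)\Bigr|\ge u_k(0)-\sum_{N\le j\le n}|q(j)|\ge n^3-(n+1)>0 .
\]
For \(P\in\PN^{(1)}\), the isolated term \(-x^d\) with \(d<N\) is the nuisance, since \(u_k(d)\) is also large and could cancel \(u_k(0)\). I would kill it by taking \(q(t)=(t-d)\,u_k(t)\), of degree \(k+1\). Then \(q(d)=0\) and \(|q(0)|=d\,u_k(0)\ge u_k(0)\ge n^3\). On the tail, \(|q(j)|\le n\), so the tail contributes at most \(n(n+1)\). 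The pairing is therefore nonzero.

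In both cases \(\deg q\le k+1<8\sqrt{n/N}\log n\le K\), which contradicts divisibility. The step I expect to need the most care is the quantitative lower bound on \(u_k(0)\) from the recurrence, with constants good enough to fit under the stated \(8\sqrt{n/N}\log n\). The cancellation issue for \(\PN^{(1)}\) is what forces the extra linear factor, and that factor costs only one degree and a factor \(n\) in the tail.
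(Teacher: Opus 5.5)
Your proof is the paper's proof in all essentials. It uses the same pairing test for divisibility by $(1-x)^K$, the same Chebyshev recurrence on an affine image of $[N,n]$, and the same extra linear factor vanishing at $d$ for $\PN^{(1)}$. Your $(t-d)u_k(t)$ is, up to a scalar and the choice of $k$, the paper's test polynomial. Reflecting the affine map so that $\lambda(0)=1+\eta>1$ harmlessly removes the paper's parity step. Your estimate $\rho\ge 1+\sqrt{2\eta}$ is slightly sharper than the paper's $\rho\ge 1+\sqrt{\eta}$, so degree about $4\sqrt{n/N}\log n$ suffices instead of $6\sqrt{n/N}\log n$.

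One step is wrong as written, and it is the step you flagged. The ratio $u_{k+1}(0)/u_k(0)$ is never $\ge\rho$. Already $u_1/u_0=\lambda<\rho$. In general
$u_{k+1}/u_k=(\rho^{k+1}+\rho^{-k-1})/(\rho^k+\rho^{-k})<\rho$, since $\rho\cdot\rho^{-k}>\rho^{-k-1}$; the ratios increase to $\rho$ from below.

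Monotonicity of the ratio gives only $u_k(0)\ge\lambda^k=(1+\eta)^k\le e^{3kN/n}$. That lower bound reaches size $n^3$ only when $k$ is of order $(n/N)\log n$, which loses the $\sqrt{n/N}$ scale that the whole argument depends on.

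The bound you actually use, $u_k(0)\ge\frac12\rho^k$, is true, but it should come from the closed form. Since $\rho+\rho^{-1}=2\lambda$, the sequence $(\rho^k+\rho^{-k})/2$ satisfies the same recurrence with initial values $1,\lambda$. Hence it equals $u_k(0)$ and is at least $\rho^k/2$. This is exactly how Lemma~\ref{lem:cutoff} argues. With that substitution your proof goes through with your constants: $\frac12\rho^k\ge\frac12 e^{k\sqrt{N/n}}\ge\frac12 n^4$ for $k=\lceil 4\sqrt{n/N}\log n\rceil$, and $k+1<8\sqrt{n/N}\log n$ for large $n$.
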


We prove Proposition~\ref{prop:div} in Section~\ref{sec:prop3}. First we show
how it finishes the main theorem.

\subsection{Divisibility as finite differences}

For \(P(x)=\sum_jc_jx^j\) and \(R(t)\in\mathbb R[t]\), define
\[
  \langle P,R\rangle=\sum_jc_jR(j).
\]
For example, \(R(t)=t^2\) gives \(\langle P,R\rangle=\sum_jc_j\,j^2\).

\begin{lemma}\label{lem:divtest}
A polynomial \(P\) is divisible by \((1-x)^K\) if and only if
\(\langle P,R\rangle=0\) for every polynomial \(R\) of degree \(<K\).
\end{lemma}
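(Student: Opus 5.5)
We prove Lemma~\ref{lem:divtest} by reducing both conditions to the vanishing of the first \(K\) derivatives of \(P\) at \(x=1\). The bridge is the operator \(D=x\frac{d}{dx}\), which satisfies \(D(x^j)=jx^j\). Consequently, for every polynomial \(R\),
\[
  (R(D)P)(x)=\sum_j c_jR(j)x^j,
  \qquad\text{so}\qquad
  \langle P,R\rangle=(R(D)P)(1).
\]

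The first step is to note that \((1-x)^K\mid P\) if and only if \(P^{(m)}(1)=0\) for all \(0\le m<K\). This holds because the order of vanishing of \(P\) at \(1\) is the largest \(k\) with \((x-1)^k\mid P\). One can see it by Taylor expanding \(P\) around \(1\), which is valid over \(\mathbb R\) or \(\Z\) since \(P\) has integer coefficients and the expansion is polynomial.

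The second step is a triangular change of basis between the two families of test functionals. Consider \(\{P\mapsto (D^mP)(1)\}_{m<K}\) and \(\{P\mapsto P^{(m)}(1)\}_{m<K}\). By induction on \(m\), write
\[
  D^m=\sum_{i\le m}S(m,i)\,x^i\frac{d^i}{dx^i},
\]
where the \(S(m,i)\) are Stirling numbers of the second kind and \(S(m,m)=1\). Conversely,
\[
  x^m\frac{d^m}{dx^m}=D(D-1)\cdots(D-m+1).
\]
Evaluating at \(x=1\), where \(x^i=1\), the spans of the two families of functionals coincide for each \(K\).

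From this the lemma follows. The polynomials \(R\) of degree \(<K\) are exactly the linear combinations of \(1,t,\dots,t^{K-1}\). Hence \(\langle P,R\rangle=0\) for all such \(R\) if and only if \((D^mP)(1)=0\) for all \(m<K\). By the change of basis this is equivalent to \(P^{(m)}(1)=0\) for all \(m<K\), and by the first step it is equivalent to divisibility by \((1-x)^K\).

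An equivalent and even more direct route uses the falling-factorial basis \(R(t)=\binom{t}{m}\). In that basis,
\[
  \langle P,\tbinom{t}{m}\rangle=\frac{P^{(m)}(1)}{m!},
\]
since \(\frac{d^m}{dx^m}x^j=m!\binom{j}{m}x^{j-m}\).

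I do not expect any real obstacle here. The only points needing care are that the two bases span the same space, namely that \(\binom{t}{m}\) for \(m<K\) form a basis of the polynomials of degree \(<K\) because they have distinct degrees, and the standard multiplicity criterion for divisibility by \((x-1)^K\). I would present the \(\binom{t}{m}\) version as the main proof, since it avoids Stirling numbers entirely.
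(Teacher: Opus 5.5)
Your proof is correct. The converse direction is the same as the paper's: the paper also pairs \(P\) with the falling factorials \(t(t-1)\cdots(t-r+1)\) to get \(P^{(r)}(1)\), then applies the multiplicity criterion at \(x=1\).

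You differ in the forward direction. The paper does not use derivatives there. It writes \(P=(1-x)^KQ\), expands, and obtains \(\langle P,R\rangle=\sum_\ell q_\ell\,\Delta^K R(\ell)\) with \(\Delta R(t)=R(t)-R(t+1)\). This vanishes for \(\deg R<K\) because \(\Delta\) lowers degree. You instead note that the functionals \(P\mapsto\langle P,\binom tm\rangle=P^{(m)}(1)/m!\) for \(m<K\) span exactly the pairings with polynomials of degree \(<K\). Both implications then follow at once from the Taylor-expansion criterion for divisibility by \((x-1)^K\).

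Your route is shorter and proves slightly more: the two families of test functionals have the same span. The paper's route makes the forward implication a derivative-free algebraic identity, which fits its finite-difference framing. Once you use the \(\binom tm\) basis, the \(D=x\frac{d}{dx}\) and Stirling-number discussion is unnecessary, as you already observe.
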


\begin{proof}
Define the finite difference
\[
  \Delta R(t)=R(t)-R(t+1).
\]
By induction on \(K\), this gives
\begin{equation}\label{eq:finite-difference-expansion}
  \Delta^K R(t)
  =
  \sum_{r=0}^K(-1)^r\binom Kr R(t+r).
\end{equation}
Indeed, the case \(K=1\) is the definition. If the formula holds for \(K\),
then
\[
\begin{aligned}
  \Delta^{K+1}R(t)
  &=\Delta^KR(t)-\Delta^KR(t+1)\\
  &=
  \sum_{r=0}^K(-1)^r\binom Kr R(t+r)
  -
  \sum_{r=0}^K(-1)^r\binom Kr R(t+1+r).
\end{aligned}
\]
After shifting the index in the second sum, the coefficient of \(R(t+r)\) is
\[
  (-1)^r\binom Kr-(-1)^{r-1}\binom K{r-1}
  =
  (-1)^r\binom{K+1}r,
\]
with the usual convention that \(\binom K{-1}=\binom K{K+1}=0\). This proves
\eqref{eq:finite-difference-expansion}.

Now assume \(P=(1-x)^KQ\), and write
\[
  Q(x)=\sum_\ell q_\ell x^\ell .
\]
Then
\[
\begin{aligned}
  P(x)
  &=
  \sum_\ell q_\ell x^\ell(1-x)^K  \\
  &=
  \sum_\ell q_\ell x^\ell
  \sum_{r=0}^K(-1)^r\binom Kr x^r \\
  &=
  \sum_\ell\sum_{r=0}^K
  q_\ell(-1)^r\binom Kr x^{\ell+r}.
\end{aligned}
\]
Since \(\langle x^s,R\rangle=R(s)\), linearity gives
\[
\begin{aligned}
  \langle P,R\rangle
  &=
  \sum_\ell\sum_{r=0}^K
  q_\ell(-1)^r\binom Kr R(\ell+r)\\
  &=
  \sum_\ell q_\ell
  \left(\sum_{r=0}^K(-1)^r\binom Kr R(\ell+r)\right)\\
  &=
  \sum_\ell q_\ell\,\Delta^KR(\ell),
\end{aligned}
\]
where the last equality is \eqref{eq:finite-difference-expansion}. The operator
\(\Delta\) lowers degree by one, so \(\Delta^KR=0\) whenever \(\deg R<K\).
This proves the first implication.

For the converse, assume all pairings against degree \(<K\) polynomials vanish.
Write \(P(x)=\sum_jc_jx^j\). For \(0\le r<K\), take
\[
  R_r(t)=t(t-1)\cdots(t-r+1),
\]
with \(R_0(t)=1\). Then
\[
  \langle P,R_r\rangle
  =
  \sum_j c_jj(j-1)\cdots(j-r+1)
  =
  P^{(r)}(1).
\]
Thus
\[
  P(1)=P'(1)=\cdots=P^{(K-1)}(1)=0.
\]
This is exactly the statement that \(P\) has a factor \((x-1)^K\), equivalently
a factor \((1-x)^K\).
\end{proof}

\subsection{Completing the automaton construction}

Let \(K_0=\left\lceil8\sqrt{n/N}\log n\right\rceil\). Since
\(f_{A,B}=\pm x^rF\), Proposition~\ref{prop:div} applied to \(F\) gives
\((1-x)^{K_0}\nmid f_{A,B}\). By Lemma~\ref{lem:divtest}, there is a
polynomial \(R\) of degree \(<K_0\) with
\[
  0\ne \langle f_{A,B},R\rangle
  =
  \sum_{a\in A}R(a)-\sum_{b\in B}R(b).
\]
Write \(R\) in the monomial basis:
\[
  R(t)=\sum_{m=0}^{K_0-1}\alpha_m t^m.
\]
If every power-sum difference
\[
  \sum_{a\in A}a^m-\sum_{b\in B}b^m
\]
were zero for \(0\le m<K_0\), then \(\langle f_{A,B},R\rangle\) would be zero.
Therefore there is an exponent
\[
  m<K_0=O(\sqrt{n/N}\log n)
\]
such that
\begin{equation}\label{eq:momentdiff}
  \sum_{a\in A}a^m\ne\sum_{b\in B}b^m.
\end{equation}

Let
\[
  D=\sum_{a\in A}a^m-\sum_{b\in B}b^m.
\]
Then \(D\ne0\). At most \(n\) positions can contribute to this difference, and
each contributing term has absolute value at most \(n^m\). Hence
\[
  |D|\le n\cdot n^m=n^{m+1}
  =\exp((m+1)\log n).
\]
Set
\[
  E=|D|\prod_{\substack{\ell\le2N\\\ell\text{ prime}}}\ell.
\]
By the prime number theorem, there is a prime \(p=O(\log E)\) such that
\(p\nmid E\). Moreover,
\[
  \log E=\log|D|+\sum_{\substack{\ell\le2N\\\ell\text{ prime}}}\log \ell
  =O((m+1)\log n+N)=O(N),
\]
where the last equality follows from \(m<K_0\) and the choice of \(N\).
Thus \(p=O(N)\), while \(p\nmid E\) gives \(p>2N\) and \(p\nmid D\).

Reducing \eqref{eq:momentdiff} modulo \(p\), we get
\[
  0\ne D
  \equiv
  \sum_{i=0}^{p-1}
  \bigl(|A_{i,p}|-|B_{i,p}|\bigr)i^m
  \pmod p.
\]
Therefore some residue \(i\in\Z/p\Z\) satisfies
\[
  |A_{i,p}|\ne |B_{i,p}|.
\]

Now return to the original words \(x,y\). If they differ among their first
\(2N\) symbols, then a prefix-checking DFA with \(O(N)\) states separates them.
Otherwise Lemma~\ref{lem:separated} gives a block \(w\in\{0,1\}^{2N}\) such
that
\[
  A=\pos_w(x),\qquad B=\pos_w(y)
\]
are distinct \(N\)-separated sets. The preceding paragraph gives a prime
\(p=O(N)\), with \(p>|w|=2N\), and a residue \(i\) such that
\[
  |\pos_w(x)_{i,p}|\ne|\pos_w(y)_{i,p}|.
\]
The two counts are between \(0\) and \(n\), so their nonzero difference has
absolute value at most \(n\). If its absolute value is \(1\), take \(q=2\);
otherwise the same consequence of the prime number theorem gives a prime
\(q=O(\log n)\) that does not divide it. Thus the two counts differ modulo
\(q\).

We spell out the occurrence-counting DFA.

\begin{lemma}\label{lem:countdfa}
Let \(w\) be a block with \(|w|<p\), let \(i\in\Z/p\Z\), and let \(q\) be a
positive integer. There is a DFA with \(2pq\) states whose final counter equals
\[
  |\{j\in\pos_w(x):j\equiv i\pmod p\}|\pmod q
\]
after reading a word \(x\).
\end{lemma}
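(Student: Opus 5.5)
We prove Lemma~\ref{lem:countdfa} with a product construction. The states are triples \((s,c,b)\), where \(s\in\Z/p\Z\) is the current position modulo \(p\), \(c\in\Z/q\Z\) is the running count, and \(b\) records the last \(|w|-1\) symbols read, so that we can tell when an occurrence of \(w\) is completed. This naive design needs about \(pq\,2^{|w|}\) states, which is far too many. To reach \(2pq\), the window must be dropped altogether and the occurrence detection handled differently. We therefore first pin down what the automaton actually needs to know.

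Pinning down this point is the main obstacle, since detecting arbitrary occurrences of \(w\) generally requires a KMP-style automaton with about \(|w|\) states. The fix is to make the position counter run at the finer granularity of the pattern. Because \(|w|<p\), the condition \(|w|<p\) means the phase modulo \(p\) can double as a pointer into \(w\). Concretely, the DFA tracks the phase \(j \bmod p\) of the current position. Whenever the phase equals \(i\), it starts checking whether the next \(|w|\) symbols spell \(w\); during the following \(|w|\) steps it stores a single bit recording whether all symbols so far have matched. The bit \(b\in\{0,1\}\) and the phase \(s\) together determine which symbol of \(w\) to compare against next, namely \(w_{s-i}\) when \(0\le s-i<|w|\) modulo \(p\).

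The window of a start at phase \(i\) ends before the next start at phase \(i\) begins, exactly because \(|w|<p\). Hence at most one check is ever active, and the states \((s,c,b)\in\Z/p\Z\times\Z/q\Z\times\{0,1\}\) suffice, giving \(2pq\) states.

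The transitions run as follows. On reading a symbol \(\sigma\) in state \((s,c,b)\), we advance \(s\mapsto s+1\). At the start \(s=i\) we set \(b=[\sigma=w_0]\), and while \(0<s-i<|w|\) we update \(b\mapsto b\wedge[\sigma=w_{s-i}]\). When \(s-i=|w|-1\) has just been processed with \(b=1\), we increment \(c\) modulo \(q\). The case \(|w|=1\) is handled in the same step.

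One check remains: we must confirm that the increment happens exactly once per \(j\in\pos_w(x)\) with \(j\equiv i\pmod p\). This follows because the matched window is precisely \(x_j\cdots x_{j+|w|-1}\). Occurrences that run past the end of \(x\) are never completed, so they are never counted, which agrees with the definition of \(\pos_w\). The final counter is therefore the stated count modulo \(q\).
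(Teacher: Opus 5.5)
Your construction is correct and is essentially the paper's: states \(\Z/p\Z\times\{0,1\}\times\Z/q\Z\) with a position clock, a single alive bit that is reset whenever the clock reads \(i\), and a counter mod \(q\). The condition \(|w|<p\) guarantees that at most one candidate occurrence in residue class \(i\) is live at any time. One detail is worth making explicit (the paper also leaves it implicit): the initial bit must be \(0\). Otherwise, when \(i>p-|w|\), the first few letters would be checked against a phantom candidate starting at position \(i-p<0\), which could produce a spurious increment.
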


\begin{proof}
Use states
\[
  \Z/p\Z\times\{0,1\}\times\Z/q\Z.
\]
The first coordinate is the current position modulo \(p\). The second
coordinate is an ``alive'' bit. It records whether the candidate occurrence of
\(w\) that began at the most recent position congruent to \(i\) is still
consistent with the letters read so far. The third coordinate is the number of
completed occurrences modulo \(q\).

Because \(|w|<p\), there can be at most one candidate occurrence starting in
the residue class \(i\pmod p\) alive at any moment. If the current clock value
is \(r\), then the offset inside the candidate, when a candidate is alive, is
the unique integer \(\ell\in\{0,\ldots,|w|-1\}\) with
\[
  r\equiv i+\ell\pmod p.
\]
Thus the clock determines which letter \(w_\ell\) should be tested. If the input
letter disagrees with \(w_\ell\), the alive bit is turned off. If
\(\ell=|w|-1\) and the letter agrees, the occurrence is completed and the third
coordinate is incremented modulo \(q\). When the clock is \(i\), the automaton
starts a new candidate if the current input letter agrees with \(w_0\).
This implements exactly the stated residue-class occurrence count.
\end{proof}

By Lemma~\ref{lem:countdfa}, the required DFA has states
\[
  \Z/p\Z\times\{0,1\}\times\Z/q\Z.
\]
At the end, choose the accepting counter residues so that the DFA accepts one of
\(x,y\) and rejects the other. The number of states is
\[
  2pq=O(N\log n)
  =
  O\!\left(n^{1/3}(\log n)^{7/3}\right),
\]
which proves Theorem~\ref{thm:main}.

\section{Proof of Proposition~\ref{prop:div}}\label{sec:prop3}

We restate the proposition.

\noindent\textbf{Proposition~\ref{prop:div}.}
No \(P\in\PN\) has a factor \((1-x)^K\) with
\(K\ge8\sqrt{n/N}\log n\).

\subsection{A bounded recurrence cutoff}

The next step constructs low-degree test polynomials that contradict the
divisibility test.

\begin{lemma}\label{lem:cutoff}
Define polynomials \(\Phi_\ell(z)\) by
\[
  \Phi_0(z)=1,\qquad \Phi_1(z)=z,\qquad
  \Phi_{\ell+1}(z)=2z\Phi_\ell(z)-\Phi_{\ell-1}(z).
\]
Then:
\[
  |\Phi_\ell(z)|\le1\qquad(-1\le z\le1),
\]
and, for \(0<\eta\le1\),
\[
  \Phi_\ell(1+\eta)\ge \frac12 e^{\ell\sqrt{\eta}/2}.
\]
Moreover \(\Phi_\ell\) has parity \(\ell\), meaning
\(\Phi_\ell(-z)=(-1)^\ell\Phi_\ell(z)\).
\end{lemma}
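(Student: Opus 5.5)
These are Chebyshev polynomials \(T_\ell\), and I will prove each of the three claims directly from the recurrence, without quoting Chebyshev theory. Parity is the quickest and comes first. By induction on \(\ell\), \(\Phi_0\) is even and \(\Phi_1\) is odd. If \(\Phi_\ell\) has parity \(\ell\) and \(\Phi_{\ell-1}\) has parity \(\ell-1\), then \(2z\Phi_\ell(z)\) has parity \(\ell+1\). Parity \(\ell-1\) agrees with parity \(\ell+1\), so \(\Phi_{\ell+1}\) has parity \(\ell+1\).

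For the bound on \([-1,1]\), I substitute \(z=\cos\theta\) and show by induction that \(\Phi_\ell(\cos\theta)=\cos(\ell\theta)\). The base cases are immediate. The inductive step is the identity
\[
\cos((\ell+1)\theta)+\cos((\ell-1)\theta)=2\cos\theta\cos(\ell\theta),
\]
which is exactly the recurrence. Since every \(z\in[-1,1]\) can be written as \(\cos\theta\), this gives \(|\Phi_\ell(z)|\le1\).

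For the growth bound, I use the hyperbolic analogue. Write \(1+\eta=\cosh s\) with \(s>0\). The same induction, now using \(\cosh((\ell+1)s)+\cosh((\ell-1)s)=2\cosh s\cosh(\ell s)\), gives \(\Phi_\ell(\cosh s)=\cosh(\ell s)\ge\frac12e^{\ell s}\). It then suffices to show \(s\ge\sqrt\eta/2\) for \(0<\eta\le1\). Since \(\cosh\) is increasing on \([0,\infty)\), this is equivalent to
\[
\cosh(\sqrt\eta/2)\le1+\eta.
\]
With \(u=\sqrt\eta/2\le1/2\), the series gives \(\cosh u=1+\tfrac{u^2}{2}\bigl(1+O(u^2)\bigr)\). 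Crudely, \(\cosh u\le1+u^2\) for \(u\le1\), because
\[
\cosh u-1=\sum_{k\ge1}\frac{u^{2k}}{(2k)!}\le u^2\sum_{k\ge1}\frac1{(2k)!}=u^2(\cosh1-1)<u^2.
\]
Hence \(\cosh u\le1+\eta/4\le1+\eta\), as needed.

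An alternative for this step is the explicit formula \(s=\log\bigl(1+\eta+\sqrt{2\eta+\eta^2}\bigr)\), which gives \(s\ge\log(1+\sqrt{2\eta})\ge\sqrt{2\eta}/(1+\sqrt2)\ge\sqrt\eta/2\) for \(\eta\le1\). The only step that needs care is the inequality \(s\ge\sqrt\eta/2\), and either the series comparison above or this logarithm estimate settles it. Everything else is routine induction.
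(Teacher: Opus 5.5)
Your proof is correct. The main difference from the paper is the bound on \([-1,1]\). You use the parametrization \(z=\cos\theta\) and the identity \(\Phi_\ell(\cos\theta)=\cos(\ell\theta)\). The paper avoids trigonometric functions and proves by induction the algebraic invariant \(\Phi_k(z)^2-2z\Phi_k(z)\Phi_{k-1}(z)+\Phi_{k-1}(z)^2=1-z^2\). For \(|z|<1\) it rewrites this as \((\Phi_k-z\Phi_{k-1})^2+(1-z^2)\Phi_{k-1}^2=1-z^2\), which forces \(|\Phi_{k-1}(z)|\le1\). Under your substitution, this invariant is just the identity for \(\sin^2\theta\).

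For the growth bound the two routes coincide up to notation. The paper writes \(\Phi_\ell(z)=(\rho^\ell+\rho^{-\ell})/2\) with \(\rho=z+\sqrt{z^2-1}\) a root of \(\lambda^2-2z\lambda+1\), which is your \(\cosh(\ell s)\) with \(\rho=e^s\). It then uses \(\rho\ge1+\sqrt\eta\) and \(\log(1+t)\ge t/2\) on \([0,1]\), much as in your second alternative. Your series comparison is a different way to bound \(s\) from below, and it is actually sharper: applying it at \(\sqrt\eta\) instead of \(\sqrt\eta/2\) gives \(\cosh(\sqrt\eta)<1+\eta\), hence \(s>\sqrt\eta\).

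Your version is shorter and gives closed forms, but it relies on \(\cos\), \(\cosh\), their addition formulas, and surjectivity of \(\cos\) onto \([-1,1]\). The paper's version uses only the recurrence and real algebra, which fits its stated aim of a fully elementary argument.
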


\begin{proof}
The parity statement follows immediately from the recurrence and the initial
conditions.

We first prove the boundedness on \([-1,1]\) by a real invariant. Fix
\(z\in[-1,1]\). We claim that
\[
  \Phi_k(z)^2-2z\Phi_k(z)\Phi_{k-1}(z)
  +\Phi_{k-1}(z)^2=1-z^2
\]
for every \(k\ge1\). For \(k=1\), this is
\[
  z^2-2z^2+1=1-z^2.
\]
If the identity holds at \(k\), then using
\[
  \Phi_{k+1}(z)=2z\Phi_k(z)-\Phi_{k-1}(z),
\]
we get
\[
\begin{aligned}
  &\Phi_{k+1}(z)^2-2z\Phi_{k+1}(z)\Phi_k(z)+\Phi_k(z)^2 \\
  &\qquad=(2z\Phi_k(z)-\Phi_{k-1}(z))^2
    -2z(2z\Phi_k(z)-\Phi_{k-1}(z))\Phi_k(z)+\Phi_k(z)^2  \\
  &\qquad=\Phi_k(z)^2-2z\Phi_k(z)\Phi_{k-1}(z)
    +\Phi_{k-1}(z)^2.
\end{aligned}
\]
So the invariant holds for all \(k\). If \(|z|<1\), rewrite it as
\[
  (\Phi_k(z)-z\Phi_{k-1}(z))^2
  +(1-z^2)\Phi_{k-1}(z)^2=1-z^2.
\]
Hence \(|\Phi_{k-1}(z)|\le1\) for every \(k\), and therefore
\(|\Phi_k(z)|\le1\). At the endpoints \(z=1\) and \(z=-1\), the recurrence
gives \(\Phi_k(1)=1\) and \(\Phi_k(-1)=(-1)^k\), so the same bound holds.

For the growth estimate outside \([-1,1]\), use the real characteristic roots.
Fix \(z>1\), and put
\[
  \rho=z+\sqrt{z^2-1}.
\]
The second root of
\[
  \lambda^2-2z\lambda+1=0
\]
is \(z-\sqrt{z^2-1}\). Since the product of the two roots is \(1\), this second
root is \(\rho^{-1}\), and hence
\[
  \rho+\rho^{-1}=2z.
\]
The sequence
\[
  \frac{\rho^\ell+\rho^{-\ell}}2
\]
has initial values (1,z) and satisfies the same recurrence as
\(\Phi_\ell(z)\). Hence
\[
  \Phi_\ell(z)=\frac{\rho^\ell+\rho^{-\ell}}2.
\]

Now take \(z=1+\eta\), where \(0<\eta\le1\). Then
\[
  \rho
  =
  1+\eta+\sqrt{(1+\eta)^2-1}
  =
  1+\eta+\sqrt{2\eta+\eta^2}
  \ge 1+\sqrt{\eta}.
\]
Therefore
\[
  \Phi_\ell(1+\eta)
  =
  \frac{\rho^\ell+\rho^{-\ell}}2
  \ge
  \frac12(1+\sqrt{\eta})^\ell.
\]
Since \(\log(1+s)\ge s/2\) for \(0\le s\le1\), this gives
\[
  \Phi_\ell(1+\eta)\ge \frac12 e^{\ell\sqrt{\eta}/2}.
\]
These \(\Phi_\ell\)'s are the usual Chebyshev polynomials of the first kind
\cite{rivlin}, but the proof above used only the recurrence, the real invariant,
and the real root formula.
\end{proof}

We now return to the sparse polynomials in Proposition~\ref{prop:div}. Their
tail terms occur only in degrees \(N,N+1,\ldots,n\). The recurrence cutoff will
be applied after mapping this tail interval to \([-1,1]\), where
Lemma~\ref{lem:cutoff} keeps \(\Phi_\ell\) bounded. The isolated constant term
at degree \(0\) is mapped just outside \([-1,1]\), where the same recurrence
polynomial grows. This is the mechanism that separates the constant term from
the tail.

Define the affine map
\[
  u(t)=\frac{2t-(n+N)}{n-N}.
\]
Then
\[
  u(N)=-1,\qquad u(n)=1,
\]
so
\[
  u([N,n])=[-1,1].
\]
At the isolated point \(0\),
\[
  u(0)=-\frac{n+N}{n-N}
      =-1-\frac{2N}{n-N}.
\]
Thus
\[
  |u(0)|=1+\eta,
  \qquad
  \eta=\frac{2N}{n-N}.
\]
Since \(N\le n/3\), we have
\[
  \frac{2N}{n}\le \eta\le \frac{3N}{n}\le1,
\]
so \(\eta\asymp N/n\).

Define the cutoff degree
\[
  \kappa(n,N)=\left\lceil6\sqrt{n/N}\log n\right\rceil .
\]
By Lemma~\ref{lem:cutoff}, the parity of \(\Phi_\ell\), and
\(\eta\ge2N/n\),
\begin{equation}\label{eq:denomlarge}
\begin{aligned}
  |\Phi_{\kappa(n,N)}(u(0))|
  &=\Phi_{\kappa(n,N)}(1+\eta)\\
  &\ge \frac12e^{\kappa(n,N)\sqrt{\eta}/2}\\
  &\ge \frac12e^{3\sqrt2\log n}
   =\frac12n^{3\sqrt2}\ge n^4
\end{aligned}
\end{equation}
for sufficiently large \(n\).
This calculation also explains the scale of the cutoff. Since the distance of
\(u(0)\) from \([-1,1]\) is \(\eta\asymp N/n\), a degree-\(\ell\) recurrence
polynomial grows there like
\[
  \exp(\Theta(\ell\sqrt{N/n})).
\]
To obtain polynomial size in \(n\), one naturally needs
\[
  \ell=\Theta(\sqrt{n/N}\log n).
\]

\subsection{Separating the constant term from the tail}

In the next two lemmas, the subscript \(\kappa(n,N)\) emphasizes that the
cutoff degree depends on the two parameters \(n\) and \(N\).

We first treat \(\PN^{(0)}\).

\begin{lemma}\label{lem:p0}
No polynomial
\[
  P(x)=1+\sum_{j=N}^n a_jx^j,\qquad |a_j|\le1,
\]
is divisible by \((1-x)^K\) for \(K>\kappa(n,N)\).
\end{lemma}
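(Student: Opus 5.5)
We argue by contradiction using the divisibility test of Lemma~\ref{lem:divtest}. Suppose \(P(x)=1+\sum_{j=N}^n a_jx^j\) is divisible by \((1-x)^K\) with \(K>\kappa(n,N)\). Put \(\kappa=\kappa(n,N)\) and take the test polynomial
\[
  R(t)=\Phi_{\kappa}(u(t)),
\]
where \(u\) is the affine map defined above and \(\Phi_\kappa\) is the recurrence polynomial of Lemma~\ref{lem:cutoff}. Since \(u\) is affine and \(\Phi_\kappa\) has degree \(\kappa\) (its leading coefficient \(2^{\kappa-1}\) is nonzero), \(R\) has degree exactly \(\kappa<K\). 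Lemma~\ref{lem:divtest} therefore gives \(\langle P,R\rangle=0\), which reads
\[
  \Phi_\kappa(u(0))=-\sum_{j=N}^n a_j\,\Phi_\kappa(u(j)).
\]

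We then bound the two sides separately. For every \(j\in[N,n]\) we have \(u(j)\in[-1,1]\), so the boundedness part of Lemma~\ref{lem:cutoff} gives \(|\Phi_\kappa(u(j))|\le 1\). Together with \(|a_j|\le1\), this bounds the right-hand side in absolute value by the number of tail terms, \(n-N+1\le n\). On the left, \eqref{eq:denomlarge} already gives \(|\Phi_\kappa(u(0))|\ge n^4\) for large \(n\); that estimate combines the parity of \(\Phi_\kappa\), the relation \(|u(0)|=1+\eta\) with \(0<\eta\le1\), and the growth bound of Lemma~\ref{lem:cutoff}. Since \(n^4>n\), the equality above is impossible, and this contradiction proves the lemma.

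Essentially no step is hard, because the work was done in Lemma~\ref{lem:cutoff} and in \eqref{eq:denomlarge}. Two points need care. First, the degree of \(R\) must be strictly less than \(K\), which follows directly from the hypothesis \(K>\kappa\). Second, the parity step for the negative argument \(u(0)=-(1+\eta)\) is handled by taking absolute values, so the sign \((-1)^\kappa\) plays no role.

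The same template should handle \(\PN^{(1)}\) later, once the extra term \(-x^d\) with \(1\le d<N\) is dealt with. The natural idea is to multiply \(R\) by a factor vanishing at \(d\), such as \((t-d)\). This costs one degree and requires controlling the factor on the tail, where it is at most \(n\), and at \(0\), where it equals \(d\ge1\). That extension is the real obstacle, not the present lemma.
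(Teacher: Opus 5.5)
Your proof is correct and is essentially the paper's argument. The paper uses the same test polynomial, only normalized as \(R(t)=\Phi_{\kappa}(u(t))/\Phi_{\kappa}(u(0))\) so that \(R(0)=1\) and the tail sum is at most \(n\cdot n^{-4}<1\); this is your comparison \(n^4>n\) divided through by \(|\Phi_\kappa(u(0))|\). Your closing remark also overstates the difficulty of the \(\PN^{(1)}\) case: the paper handles it in a few lines with exactly the extra linear factor you suggest.
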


\begin{proof}
Use the polynomial
\[
  R(t)=\frac{\Phi_{\kappa(n,N)}(u(t))}{\Phi_{\kappa(n,N)}(u(0))}.
\]
Since \(u(t)\) is affine and \(\Phi_{\kappa(n,N)}\) has degree \(\kappa(n,N)\),
\(\deg R=\kappa(n,N)\).
Also \(R(0)=1\). For \(N\le j\le n\), we have \(u(j)\in[-1,1]\), and therefore
by Lemma~\ref{lem:cutoff},
\[
  |\Phi_{\kappa(n,N)}(u(j))|\le1.
\]
Using \eqref{eq:denomlarge},
\[
  |R(j)|\le n^{-4}
  \qquad(N\le j\le n).
\]

Suppose \((1-x)^K\mid P\) with \(K>\kappa(n,N)=\deg R\).
Lemma~\ref{lem:divtest} gives
\[
  0=\langle P,R\rangle
   =R(0)+\sum_{j=N}^n a_jR(j).
\]
The tail is bounded by
\[
  \left|\sum_{j=N}^n a_jR(j)\right|
  \le
  \sum_{j=N}^n |R(j)|
  \le
  n\cdot n^{-4}
  =
  n^{-3}
  <1.
\]
But \(R(0)=1\), so the displayed equality is impossible. Hence no such
divisibility can occur for \(K>\deg R\).
\end{proof}

It remains to handle the extra low-degree term in \(\PN^{(1)}\).

\begin{lemma}\label{lem:p1}
No polynomial
\[
  P(x)=1-x^d+\sum_{j=N}^n a_jx^j,
  \qquad 1\le d<N,\quad |a_j|\le1,
\]
is divisible by \((1-x)^K\) for \(K>\kappa(n,N)+1\).
\end{lemma}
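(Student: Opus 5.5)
The idea is to kill the extra term \(-x^d\) by pairing against a test polynomial that vanishes at \(t=d\), while still behaving like the cutoff \(R\) from Lemma~\ref{lem:p0} at \(0\) and on the tail. I would take
\[
  S(t)=\frac{(d-t)}{d}\cdot\frac{\Phi_{\kappa(n,N)}(u(t))}{\Phi_{\kappa(n,N)}(u(0))},
\]
which has degree \(\kappa(n,N)+1\), satisfies \(S(0)=1\), and satisfies \(S(d)=0\). If \((1-x)^K\mid P\) with \(K>\kappa(n,N)+1=\deg S\), Lemma~\ref{lem:divtest} gives
\[
  0=\langle P,S\rangle=S(0)-S(d)+\sum_{j=N}^n a_jS(j)=1+\sum_{j=N}^n a_jS(j).
\]

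It remains to bound the tail. For \(N\le j\le n\) we have \(|d-j|/d\le n\), since \(d\ge1\) and \(j\le n\). Combining Lemma~\ref{lem:cutoff} with \eqref{eq:denomlarge} then gives
\[
  |S(j)|\le n\cdot n^{-4}=n^{-3}.
\]
Summing over at most \(n\) indices bounds the tail by \(n^{-2}<1\), which contradicts the identity above. This contradiction proves the lemma.

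The only delicate step is ensuring that the extra linear factor does not spoil the tail estimate. It costs a factor of at most \(n\), which the margin \(n^4\) in \eqref{eq:denomlarge} absorbs comfortably. So I expect no real obstacle beyond checking that the degree bookkeeping matches the hypothesis \(K>\kappa(n,N)+1\).

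Together with Lemma~\ref{lem:p0}, and since \(8\sqrt{n/N}\log n>\kappa(n,N)+1\) for large \(n\), this will yield Proposition~\ref{prop:div}.
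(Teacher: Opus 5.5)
Your proposal is correct and is essentially the paper's proof. Your \(S\) is exactly the paper's test polynomial \(R(t)=(1-t/d)\,\Phi_{\kappa(n,N)}(u(t))/\Phi_{\kappa(n,N)}(u(0))\), and the pairing identity and tail estimate go through the same way; your bound \(|d-j|/d\le n\) is marginally sharper than the paper's \(1+j/d\le 2n\), giving a tail of \(n^{-2}\) instead of \(2n^{-2}\), which makes no difference.
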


\begin{proof}
The term \(-x^d\) is below the tail interval, so we multiply by a linear factor
that kills it. Define
\[
  R(t)=\left(1-\frac td\right)\frac{\Phi_{\kappa(n,N)}(u(t))}{\Phi_{\kappa(n,N)}(u(0))}.
\]
Then
\[
  R(0)=1,\qquad R(d)=0,
\]
and
\(\deg R\le \kappa(n,N)+1\).
For \(N\le j\le n\), using \(|\Phi_{\kappa(n,N)}(u(j))|\le1\) and
\eqref{eq:denomlarge},
\[
  |R(j)|
  \le
  \left(1+\frac jd\right)n^{-4}.
\]
Since \(d\ge1\) and \(j\le n\),
\[
  1+\frac jd\le 1+n\le2n
\]
for \(n\ge1\). Hence
\[
  |R(j)|\le2n^{-3}
  \qquad(N\le j\le n).
\]

Suppose \((1-x)^K\mid P\) with \(K>\kappa(n,N)+1\ge\deg R\).
Lemma~\ref{lem:divtest} gives
\[
  0=\langle P,R\rangle
   =R(0)-R(d)+\sum_{j=N}^n a_jR(j)
   =
   1+\sum_{j=N}^n a_jR(j).
\]
The tail is bounded by
\[
  \left|\sum_{j=N}^n a_jR(j)\right|
  \le
  n\cdot 2n^{-3}
  =
  2n^{-2}
  <1
\]
for large \(n\). This contradicts the displayed identity. Thus the desired
nondivisibility holds in the \(\PN^{(1)}\) case as well.
\end{proof}

For large \(n\), \(\sqrt{n/N}\log n>1\), and hence
\[
  \kappa(n,N)+1
  \le6\sqrt{n/N}\log n+2
  <8\sqrt{n/N}\log n.
\]
Proposition~\ref{prop:div} now follows from
Lemmas~\ref{lem:p0} and~\ref{lem:p1}.

\section{Comparison with the Previous Proof}

The previous proof uses a moment exponent of order
\(n^{1/3}\log^5 n\), a prime modulus of order \(n^{1/3}\log^6 n\), and one
additional logarithm for the final modular counter, giving an
\(O(n^{1/3}\log^7 n)\) state bound.

Our recurrence cutoff replaces the complex-analytic sparse-polynomial estimate
by
\[
  K=O\!\left(\sqrt{n/N}\log n\right).
\]
The subsequent power-sum and prime-selection steps give
\[
  p=O\!\left(N+\sqrt{n/N}\log^2 n\right),
  \qquad q=O(\log n).
\]
Hence the final DFA has
\[
  O\!\left(\left(N+\sqrt{n/N}\log^2 n\right)\log n\right)
\]
states. Balancing the two terms inside the parentheses yields
\[
  N^3=n\log^4 n,
  \qquad
  N=n^{1/3}(\log n)^{4/3},
\]
and therefore the bound in Theorem~\ref{thm:main}.

For the recurrence-cutoff method itself, the degree
\(\Theta(\sqrt{n/N}\log n)\) is the natural scale: the distinguished point
lies only \(\Theta(N/n)\) outside \([-1,1]\), where a degree-\(k\) recurrence
polynomial grows as \(\exp(\Theta(k\sqrt{N/n}))\). Thus obtaining polynomial
growth in \(n\) requires the displayed logarithmic factor. This does not rule
out improvements by a different method.

\section*{Acknowledgments}

The author thanks Zachary Chase for helpful comments, and thanks Kenneth Regan, Pepe Swer and Joseph Swernofsky for careful readings and suggestions that improved the presentation.

\section*{Declaration of AI Assistance}

The author discovered simplification potential when studying the analytical property of the periodic DFAs and the polynomial recurrences. GPT 5.5 was used
interactively to explore candidate simplifications, test proof variants, and
assist with drafting. The author rewrote the elementary proof. The author assumes responsibility for all content.

\end{document}